\def \eps {\varepsilon}
\def \R {{\mathbb R}}
\def \Z {{\mathbb Z}}
\newcommand{\ip}[2]{\ensuremath{\left<#1,#2\right>}}
\newcommand{\ov}{\text{\bf OV}}
\newtheorem{theorem}{Theorem}[section]
\newtheorem{corollary}{Corollary}[section]
\newtheorem{lemma}{Lemma}[section]
\newtheorem{conjecture}{Conjecture}[section]
\def \poly {\text{poly}}
\newenvironment{reminder}[1]{\bigskip
\noindent {\bf Reminder of #1  }\em}{\smallskip}
\title{On the Difference Between Closest, Furthest, and Orthogonal Pairs: \\ {\normalsize Nearly-Linear vs Barely-Subquadratic Complexity in Computational Geometry}}
\author{Ryan Williams\thanks{MIT CSAIL \& EECS, {\tt rrw@mit.edu}. Supported by an NSF CAREER award.}}
\begin{document}
\date{}

\maketitle
\begin{abstract} 

Point location problems for $n$ points in $d$-dimensional Euclidean space (and $\ell_p$ spaces more generally) have typically had two kinds of running-time solutions: 

\begin{itemize}
\item[(Nearly-Linear)] less than $d^{\poly(d)} \cdot n \log^{O(d)} n$ time, or 
\item[(Barely-Subquadratic)] $f(d) \cdot n^{2-1/\Theta(d)}$ time, for various functions $f$. 
\end{itemize}

For small $d$ and large $n$, ``nearly-linear'' running times are generally feasible, while the ``barely-subquadratic'' times are generally infeasible, requiring essentially quadratic time. For example, in the Euclidean metric, finding a Closest Pair among $n$ points in $\R^d$ is nearly-linear, solvable in $2^{O(d)} \cdot n \log^{O(1)} n$ time, while the known algorithms for finding a Furthest Pair (the diameter of the point set) are only barely-subquadratic, requiring $\Omega(n^{2-1/\Theta(d)})$ time. Why do these proximity problems have such different time complexities? Is there a barrier to obtaining nearly-linear algorithms for problems which are currently only barely-subquadratic?

We give a novel exact and deterministic self-reduction for the Orthogonal Vectors problem on $n$ vectors in $\{0,1\}^d$ to $n$ vectors in $\Z^{\omega(\log d)}$ that runs in $2^{o(d)}$ time. As a consequence, barely-subquadratic problems such as Euclidean diameter, Euclidean bichromatic closest pair, ray shooting, and incidence detection do not have $O(n^{2-\epsilon})$ time algorithms (in Turing models of computation) for dimensionality $d = \omega(\log \log n)^2$, unless the popular Orthogonal Vectors Conjecture and the Strong Exponential Time Hypothesis are false. That is, while the poly-log-log-dimensional case of Closest Pair is solvable in $n^{1+o(1)}$ time, the poly-log-log-dimensional case of Furthest Pair can encode difficult large-dimensional problems conjectured to require $n^{2-o(1)}$ time.

We also show that the All-Nearest Neighbors problem in $\omega(\log n)$ dimensions requires $n^{2-o(1)}$ time to solve, assuming either of the above conjectures.  

\end{abstract}

\thispagestyle{empty}
\newpage
\setcounter{page}{1}

\section{Introduction} 

Point proximity and location problems have been core to computer science and computational geometry since Minsky and Papert~\cite{Minsky-Papert69} and Knuth's post office problem~\cite{knuth1973art}. In this paper, we study the problems of finding the closest pair or furthest pair in a point set (i.e., the \emph{diameter}) in moderate dimensions under the most natural norms, and incidence problems such as Hopcroft's problem~\cite{Matousek93,Erickson95,Erickson96}: given $n$ points in $\R^d$ and $n$ hyperplanes through the origin, does any point lie on any line? (Note this is equivalent to asking whether there are two vectors which are orthogonal, i.e., have inner product $0$.)  For closest and furthest pair problems, we also consider their \emph{bichromatic} versions where there are $n$ red points, $n$ blue points, and we wish to find a closest (or furthest) red/blue pair.
\footnote{Note we do not consider $\ell_1$ and $\ell_2$ bichromatic furthest pair explicitly, since it is easy to efficiently reduce between the bichromatic version and the uncolored version. For example, we can reduce from bichromatic to non-bichromatic by adding one extra dimension with large (positive if red, negative if blue) coordinates.}
 We consider these problems under the $\ell_p$ metric for $p \in \{1,2\}$, as well as $\ell_{\infty}$. As is standard, we use $\ell_p^d$ to denote the metric space $(\R^d,\ell_p)$, with the distance functions $||x-y||_p = (\sum_{i=1}^d |x_i - y_i|^p)^{1/p}$ and $||x-y||_{\infty} = \max_{i} |x_i - y_i|$.

For the case of very large $n$ and modest $d$, some of these problems appear to be far more difficult to solve than others, for reasons which are still not well-understood (beyond the fact that known techniques do not work). As early as 1976, Bentley and Shamos~\cite{bentley1976divide} noticed an apparent difference in the difficulties of solving furthest pair and closest pair in $\ell_2$ in higher dimensions, and raised it as an important issue to study. The following table gives a rough classification of key problems which are known to be ``easy'' and which seem to be ``hard'' for large $n$ and modest $d$.\footnote{In this paper, we assume a machine model that allows basic arithmetic on entries of vectors and comparisons of points in $\Z^d$ in $\poly(d,\log M)$ time, where $M$ is the largest magnitude of an integer in an input. Such a concrete model is necessary for our hardness results, which are concerned with discrete tasks such as SAT-solving in typical Turing models of computation.}

\begin{center}
\begin{tabular}{l||l}
{\bf Nearly-Linear}   			($d^{\poly(d)} \cdot n \log^{O(d)} n$ time) &		 {\bf Barely-Subquadratic}   ($f(d) \cdot n^{2-1/\Theta(d)}$ time) \\
\hline
(Bichrom.) $\ell_{\infty}^d$-Furthest Pair~\cite{Yao82,Gabow-Bentley-Tarjan}  &   
$\ell_2^d$-Furthest Pair~\cite{Yao82,Agarwal1990euclidean} \\
$\ell_2^d$-Closest Pair~\cite{bentley1976divide,khuller1995simple,dietzfelbinger1997reliable}
&   Bichrom.~$\ell_2^d$-Closest Pair~\cite{Agarwal1990euclidean}\\
$\ell_1^d$-Furthest Pair~\cite{Yao82,Gabow-Bentley-Tarjan}	
&		$d$-dim. Hopcroft's Problem~\cite{Chazelle93,Matousek93}\\
(Bichrom.) $\ell_1^d$ and $\ell_{\infty}^d$-Closest Pair \\
~~~~~~\cite{Gabow-Bentley-Tarjan,Preparata1985introduction,dietzfelbinger1997reliable,Chan17}	
\end{tabular}
\end{center}

Note that there are many other core geometry problems with one of the two above runtime types; the above are just some of the core bottlenecks. For example, Hopcroft's problem is a special case of problems such as (batch) point location and ray shooting, which also suffer from the same $n^{2-1/\Theta(d)}$ dependency~(see Erickson's work on hardness from Hopcroft's problem~\cite{Erickson95} for more).

Why do some problems fall on the right side of the table, and can they be moved to the left side? Besides the natural question of understanding the difference between furthest and closest pair, here is another motivating example. In 1984, Gabow, Bentley, and Tarjan~\cite{Gabow-Bentley-Tarjan} showed that the $\ell_{\infty}$-furthest pair problem (and its bichromatic version) in $\R^d$ is \emph{very} easy, solvable in $\tilde{O}(d \cdot n)$ time. Using this fast algorithm, along with an isometric embedding of $\ell_1^d$ into $\ell_{\infty}^{2^d}$, they then solve the (bichromatic or not) furthest pair problem for $\ell_1$ in $\tilde{O}(2^d \cdot n)$ time. 
So computing the $\ell_{\infty}$-diameter and $\ell_1$-diameter are both ``nearly-linear'' time problems in low dimensions. 

{\bf Can similar bounds be achieved for $\ell_2$-furthest pair?} As the above table indicates, the best known algorithms for furthest pair in $\ell_2$ (bichromatic or not) still have running time bounds of the form $O(n^{2-1/\Theta(d)})$, which is ``barely subquadratic.'' Is there a fundamental reason why this problem is so much harder in $\ell_2$ than in $\ell_1$ or in $\ell_{\infty}$? 

The situation is arguably counter-intuitive, because $\ell_1$ and $\ell_{\infty}$ are technically more ``universal'' metrics than $\ell_2$, so one might think that problems should be more difficult under the former than the latter. For instance, efficient isometric embeddings of $n$-point sets from $\ell_2$ into $\ell_1$ and into $\ell_{\infty}$ \emph{are} known in the literature on metric embeddings (see the book~\cite{Deza-Laurent97} for references), whereas the converse is not true (see for example~\cite[Chapter 2]{Wells-Williams75}). However, these isometric embeddings need $\Omega(n)$ dimensions in the most general cases. There may still be embeddings (perhaps randomized) which map low-dimensional $n$-point sets in $\ell_2$ into sub-exponential-dimensional $\ell_1$ (or $\ell_{\infty}$). Indeed, in the case of low distortion (where the distances in an embedding are allowed to shrink or grow by small multiplicative amounts) these are well-known, even deterministically in some regimes~\cite{linial1994geometry,indyk2007uncertainty,guruswami2010almost}. The results of this paper show that ``nice'' isometric embeddings of $\ell_2$ into $\ell_1$ would have major implications in fine-grained complexity.

\subsection{Strong Difficulty of Proximity Problems in the Euclidean metric}

We offer good reasons why furthest pair in $\ell_2$ and other barely-subquadratic problems will be difficult to solve as fast as closest pair, even in very low dimensions. We do this by relating $\ell_2$-furthest pair and other ``barely subquadratic'' problems to the Orthogonal Vectors Conjecture~\cite{Williams05,DBLP:conf/icalp/AbboudVW14} and the Strong Exponential Time Hypothesis~\cite{IP01,CIP09} in a novel way. 

The \ov{} problem is: \emph{given $n$ vectors $v_1,\ldots,v_n \in \{0,1\}^d$, are there $i,j$ such that $\langle v_i,v_j\rangle = 0$?} Clearly $O(n^2 d)$ time suffices for solving \ov{}, and slightly subquadratic-time algorithms are known in the case of small $d$~\cite{AbboudWY15,DBLP:conf/soda/ChanW16}. It is conjectured that there is no \ov{} algorithm running in (say) $n^{1.99}$ time for dimensionality $d = \omega(\log n)$.

\begin{conjecture}[Orthogonal Vectors Conjecture (OVC) \cite{Williams05,DBLP:conf/icalp/AbboudVW14}]
For every $\eps > 0$, there is a $c \geq 1$ such that \ov{} cannot be solved in $n^{2-\eps}$ time on instances with $d = c\log n$. 
\end{conjecture}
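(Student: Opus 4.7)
The final statement is explicitly labeled a \emph{conjecture}, and it is in fact one of the central unproven hypotheses of fine-grained complexity. So any ``proof proposal'' here is necessarily a plan for the strongest available evidence, together with a frank account of why an unconditional attack is out of reach. I will not attempt to reprove the well-known conditional implication SETH~$\Rightarrow$~OVC, which is a different statement; I will instead describe what a direct plan would look like and where it breaks.

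The natural strategy for a direct proof would be to exhibit an algorithmic lower bound argument against OV on $n$ vectors of dimension $d = c \log n$. One might try a communication-complexity style approach: embed a hard two-party problem (e.g.\ set-disjointness, inner-product-mod-2, or a carefully chosen promise variant) into the OV instance, so that any $n^{2-\eps}$-time RAM algorithm yields a protocol beating a known $\Omega(d)$ communication lower bound. A second plan would be a direct cell-probe or data-structure-style lower bound against preprocessing: show that any data structure on one half of the vectors answering ``is there an orthogonal partner?'' queries on the other half must use near-linear probes per query on average, giving a quadratic total. A third plan would try to leverage existing structure results (e.g.\ algebraic lower bounds for polynomial evaluation, or the matrix-rigidity connection implicit in the Chan--Williams approach) to produce an $n^{2-o(1)}$ lower bound in a restricted model and then remove the restriction.

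The hard part is the last step of each of these plans, and the obstacle is overwhelming: OVC asserts an explicit $n^{2-o(1)}$-time lower bound in the RAM model for a problem in~$\mathsf{P}$, and no such super-linear uniform bound is currently known for any explicit $\mathsf{NP}$ problem — we cannot even rule out linear-time SAT on multitape Turing machines. The generic barriers (relativization, natural proofs, algebrization) plus the absence of \emph{any} successful technique for proving super-linear uniform RAM lower bounds on explicit polynomial-time problems mean that each of the three plans above will, at the critical step, either reduce the question to a harder open problem (such as a super-linear branching-program lower bound) or produce only a restricted-model bound that cannot be lifted. This is precisely why OVC is taken as a hypothesis in the present paper rather than proved, and why the paper's contributions lie in showing consequences of OVC for geometric problems rather than in establishing OVC itself; a genuinely unconditional proof would represent a breakthrough far beyond the scope of the present work.
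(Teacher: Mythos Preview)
Your assessment is correct: the statement is a \emph{conjecture}, and the paper does not prove it. The paper uses OVC (and SETH) purely as a hypothesis from which to derive conditional hardness results for geometric problems; it offers no argument for OVC itself beyond noting that it is implied by SETH. Your discussion of why an unconditional proof is out of reach is accurate and appropriate, though strictly speaking no proof proposal was expected here.
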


In other words, OVC states that \ov{} requires $n^{2-o(1)}$ time on instances of dimension $\omega(\log n)$. OVC is plausible because it is implied by (and looks much more likely than) the popular Strong Exponential Time Hypothesis~\cite{IP01,CIP09} on the time complexity of solving $k$-SAT~\cite{Williams05,Williams-Yu14}. 

Straightforward transformations show that OVC implies that both furthest and bichromatic closest pair in $\ell_1^{\omega(\log n)}$ and $\ell_2^{\omega(\log n)}$ require $n^{2-o(1)}$ time~\cite{Williams05,AlmanW15}. Also assuming OVC, David, Kartik, and Laekhanukit~\cite{David16} show that (non-bichromatic) closest pair in $\ell_p^{\omega(\log n)}$ for $p > 2$ and $\ell_{\infty}^{\omega(\log n)}$ also require $n^{2-o(1)}$ time. It is not so surprising that some proximity search problems in super-log dimensions are hard under OVC, because OVC is a hardness conjecture about a problem in super-log dimensions. 

In this paper, we show that OVC implies bichromatic closest pair and furthest pair in $\ell_2$ require essentially quadratic time for even \emph{poly-loglog} dimensions, in stark contrast with bichromatic closest pair and furthest pair in both $\ell_1$ and $\ell_{\infty}$ (which both have $n^{1+o(1)}$-time solutions in this case). Our main technical tool is the following dimensionality reduction for Orthogonal Vectors:

\begin{lemma}[Dimensionality Reduction for OV] \label{dim-red} Let $\ell \in [1,d]$. There is an $n \cdot d^{O(d/\ell)}$-time reduction from \ov{} for $n$ points in $\{0,1\}^d$ to $d^{O(d/\ell)}$ instances of \ov{} for $n$ points in $\Z^{\ell+1}$, with vectors of $O((d \log d)/\ell)$-bit entries.
\end{lemma}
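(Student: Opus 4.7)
The plan is to partition the $d$ coordinates into $\ell$ blocks $B_1, \ldots, B_\ell$ of size $m = d/\ell$, and to exploit the fact that for $\{0,1\}$-valued vectors, $\langle v_i, v_j\rangle = 0$ if and only if $\langle v_i^{(k)}, v_j^{(k)}\rangle = 0$ for every block $k$ (where $v^{(k)} := v|_{B_k}$). The target is to encode this ``AND of $\ell$ per-block orthogonality checks'' as one integer inner product in $\mathbb{Z}^{\ell+1}$, paying $d^{O(d/\ell)}$ separate OV instances as the branching cost of the reduction.

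The core step is the following observation: if we had in hand the block pattern $p_k := v_j^{(k)} \in \{0,1\}^m$ for each $k$, then each block-$k$ orthogonality check would collapse to the single scalar constraint $\sum_{s \in \text{supp}(p_k)} v_i[s] = 0$. We could then package the $\ell$ block-scalars as the first $\ell$ coordinates of a bichromatic OV vector, using the $(\ell+1)$-st coordinate as an ``offset'' that vanishes only when $v_j$ truly matches the guessed pattern $(p_1,\ldots,p_\ell)$; this yields an OV instance in $\mathbb{Z}^{\ell+1}$ whose zero inner products are precisely the orthogonal pairs having that $v_j$ pattern. Naively enumerating all $(p_1,\ldots,p_\ell) \in (\{0,1\}^m)^\ell$ gives $2^d$ instances, far too many.

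To compress the enumeration down to $d^{O(d/\ell)}$, I would avoid spelling out the full block patterns and instead enumerate a sparse ``signature'' of size $O(d/\ell)$, namely a sequence of $O(d/\ell)$ coordinates from $[d]$, which is storable in $O((d\log d)/\ell)$ bits (matching the entry-size bound in the lemma). A natural algebraic handle is the polynomial identity $[\langle v_i,v_j\rangle=0]=\prod_s(1-v_i[s]v_j[s])=\sum_{S\subseteq[d]}(-1)^{|S|}\prod_{s\in S}v_i[s]\prod_{s\in S}v_j[s]$, decomposing each $S$ by blocks as $S=S_1 \sqcup \cdots \sqcup S_\ell$ with $S_k \subseteq B_k$. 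Enumerating only the ``low-weight'' tuples $(S_1,\ldots,S_\ell)$ with $\sum_k|S_k|\le d/\ell$ caps the count at $\binom{d}{\le d/\ell}\le d^{O(d/\ell)}$, and the contribution of each enumerated tuple factors into $\ell+1$ integer coordinates: the monomial $\prod_{s\in S_k}v[s]$ per block, plus a sign/normalization coordinate.

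The main obstacle is making the \emph{product} structure across blocks interact cleanly with an \emph{additive} inner product and with the truncation-to-low-weight step. The $(\ell+1)$-st coordinate must be designed as a global offset or sign-corrector so that every orthogonal pair is caught in at least one instance and no non-orthogonal pair spuriously produces a zero. Carrying this out will combine an algebraic verification tied to the polynomial identity above with a combinatorial covering argument showing that the low-weight block-signatures collectively suffice; I expect this correctness argument to be the most delicate piece, together with the bookkeeping to keep entry magnitudes at $d^{O(d/\ell)}$ as claimed.
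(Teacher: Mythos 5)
Your proposal takes a genuinely different tack from the paper's, but as written it has a gap that I do not see how to close. The identity
\[
[\langle v_i,v_j\rangle=0]=\prod_{s}(1-v_i[s]v_j[s])=\sum_{S\subseteq[d]}(-1)^{|S|}\prod_{s\in S}v_i[s]\prod_{s\in S}v_j[s]
\]
is exact only when the sum ranges over \emph{all} $2^d$ subsets $S$. You propose to restrict the enumeration to tuples $(S_1,\ldots,S_\ell)$ with $\sum_k|S_k|\le d/\ell$, but for a non-orthogonal pair with (say) inner product $r>d/\ell$, the full alternating sum $\sum_{T\subseteq\text{supp}(v_i)\cap\text{supp}(v_j)}(-1)^{|T|}$ vanishes precisely because it is a complete binomial alternating sum over an $r$-element set; truncating it at weight $d/\ell$ leaves a nonzero partial sum that depends on $r$ and can collide with the value $1$ attained by genuine orthogonal pairs. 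You flag this as ``the most delicate piece'' together with ``a combinatorial covering argument,'' but no such argument is supplied, and I do not believe one exists in this form: low-weight truncation of inclusion–exclusion is the reason the polynomial method for OV uses \emph{probabilistic} polynomials, which are not available to a deterministic, exact reduction.

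A second, structural problem: even granting the truncation, each enumerated subset $T$ yields one scalar $m_{i,T}=\prod_{s\in T}v_i[s]$ per vector, so the natural output is a single $d^{O(d/\ell)}$-dimensional vector per input point (one coordinate per $T$), not $d^{O(d/\ell)}$ separate instances of $(\ell+1)$-dimensional OV. Your remark that ``the contribution of each enumerated tuple factors into $\ell+1$ integer coordinates'' does not hold: $\prod_{k}\prod_{s\in S_k}v[s]$ is a single $\{0,1\}$ number, not an $\ell$-vector, and the alternating signs prevent you from reading a zero inner product off an AND/OR of $(\ell+1)$-dimensional checks over $\Z$.

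For contrast, the paper's route is to encode each block as a low-degree polynomial, so that the inner product $\langle u,v\rangle$ becomes a \emph{single fixed coefficient} (the coefficient of $x^{d/\ell-1}$) of the degree-$\le 2d/\ell$ product polynomial $R_{u,v}(x)=\sum_i P_{u,i}(x)Q_{v,i}(x)$, whose coefficients all lie in $[0,d]$. Evaluating at the single deterministic point $x_0=d+1$ makes $R_{u,v}$ uniquely recoverable from the one integer $R_{u,v}(x_0)=\langle P_u(x_0),Q_v(x_0)\rangle$, and the branching factor $d^{O(d/\ell)}$ comes from enumerating the $(d+1)^{2d/\ell+1}$ possible target polynomials with zero middle coefficient, not from enumerating subsets. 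That is what cleanly delivers $\ell$ integer coordinates plus one extra ``target-offset'' coordinate, which your inclusion–exclusion framing does not produce.
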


Applying this lemma, we establish quadratic-time hardness for the barely-subquadratic Hopcroft's problem, $\ell_2$-Furthest Pair, and Bichromatic $\ell_2$-Closest Pair in small (poly-log-log) dimensions. It follows that if any one of these three problems became ``nearly-linear'', then there would have many interesting algorithmic consequences, including new SAT-solving algorithms. For example:

\begin{theorem}[Hardness of Hopcroft's Problem] \label{hopcroft} Under SETH (or OVC), Hopcroft's problem in $\omega(\log \log n)$ dimensions requires $n^{2-o(1)}$ time, with vectors of $O(\log n)$-bit entries.
\end{theorem}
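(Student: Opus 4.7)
The plan is to use Lemma~\ref{dim-red} as a black box to convert a canonical hard instance of \ov{} (in $d = c \log n$ dimensions, whose hardness comes from SETH/OVC) into many instances of \ov{} in $\omega(\log \log n)$ dimensions with integer entries, and then observe that each such small instance is essentially a Hopcroft's problem instance. Fix $\eps > 0$ and consider an \ov{} instance on $n$ vectors in $\{0,1\}^d$ with $d = c \log n$. Apply Lemma~\ref{dim-red} with parameter $\ell := s(n) \cdot \log \log n$, where $s(n) \to \infty$ is an arbitrarily slowly growing function. This produces $d^{O(d/\ell)}$ sub-instances of \ov{} in dimension $\ell + 1 = \omega(\log \log n)$, each on $n$ vectors in $\Z^{\ell+1}$.

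Next I verify all three parameters. The number of sub-instances (and the time to produce them) is
\[
d^{O(d/\ell)} = (c \log n)^{O((c\log n)/\ell)} = 2^{O((\log n)(\log \log n)/\ell)} = 2^{O((\log n)/s(n))} = n^{o(1)}.
\]
The bit-length of each integer entry is $O((d \log d)/\ell) = O((\log n)(\log \log n)/\ell) = O((\log n)/s(n)) = o(\log n)$, which is well within the $O(\log n)$ bound asserted in the theorem. And the output dimension $\ell+1$ is $\omega(\log \log n)$, as required. The total reduction time is $n \cdot d^{O(d/\ell)} = n^{1+o(1)}$.

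Finally I reduce each \ov{} sub-instance (on $n$ vectors in $\Z^{\ell+1}$) to Hopcroft's problem in $\ell+1$ dimensions: interpret the vector set once as the $n$ points and once as the $n$ normals of hyperplanes through the origin, so that an incidence corresponds exactly to a pair with inner product zero. Vectors $v_i$ that are self-orthogonal (only zero vectors, since entries are integers and we can always force a positive entry) are handled by a trivial $O(n)$ pre-scan. If Hopcroft's problem in $\omega(\log \log n)$ dimensions with $O(\log n)$-bit entries admitted an $O(n^{2-\eps})$ algorithm, the total running time would be $n^{1+o(1)} + n^{o(1)} \cdot O(n^{2-\eps}) = O(n^{2-\eps+o(1)})$, contradicting OVC (and hence SETH via the known implication).

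The only subtlety is the tight parameter balance in Lemma~\ref{dim-red}: we need $\ell$ large enough that $d^{O(d/\ell)}$ is $n^{o(1)}$ (forcing $\ell = \omega(\log \log n)$) while simultaneously keeping $\ell+1$ as small as possible so the resulting Hopcroft dimension is truly $\omega(\log \log n)$. The choice $\ell = s(n) \log \log n$ with $s(n) \to \infty$ arbitrarily slowly threads this needle, and also automatically keeps the bit-length at $O(\log n)$; this balance is the only non-routine part of the argument, as everything else is a direct invocation of Lemma~\ref{dim-red} and a trivial casting of \ov{} as an incidence question.
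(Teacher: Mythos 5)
Your proposal is correct and follows essentially the same route as the paper: apply Lemma~\ref{dim-red} with $\ell$ just above $\log\log n$, check that the reduction runs in $n^{1+o(1)}$ time and produces $n^{o(1)}$ integer-valued \ov{} subinstances in $\omega(\log\log n)$ dimensions, and read each such subinstance directly as a Hopcroft instance. The paper parameterizes $\ell := c(\log d)/\alpha$ with $\alpha$ a small constant chosen after $c$ and $\eps$, which is merely a different bookkeeping of your choice $\ell = s(n)\log\log n$ with $s(n)\to\infty$ arbitrarily slowly.
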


\begin{theorem}[Hardness of $\ell_2$-Furthest Pair] \label{furthest-pair} Under SETH (or OVC), finding a furthest pair in $\omega(\log \log n)^2$ dimensions under the $\ell_2$ norm requires $n^{2-o(1)}$ time, with vectors of $O(\log n)$-bit entries.
\end{theorem}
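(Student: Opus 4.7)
The plan is to compose Lemma~\ref{dim-red} with a direct reduction from integer-valued Orthogonal Vectors to $\ell_2$-Furthest Pair. Starting from an OV instance on $n$ vectors in $\{0,1\}^{c\log n}$ (the OVC-hard regime), I first apply Lemma~\ref{dim-red} with $\ell$ chosen just large enough to make the overhead $d^{O(d/\ell)}$ sub-polynomial in $n$. Since $d^{d/\ell} = n^{o(1)}$ is equivalent to $(d/\ell)\log d = o(\log n)$, the critical threshold (for $d = c\log n$) is $\ell = \omega(\log\log n)$, and the same choice keeps the entry bit-complexity $O((d\log d)/\ell) = o(\log n)$ well inside the $O(\log n)$ budget. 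The lemma then produces $n^{o(1)}$ sub-instances of OV on $n$ vectors in $\Z^{\ell+1}$, computable in $n^{1+o(1)}$ time overall.

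Second, I would convert each integer-OV sub-instance to an $\ell_2$-Furthest Pair instance. The classical idea is to equalize all vectors to a common $\ell_2^2$-norm $W$, so $\|v-w\|_2^2 = 2W - 2\langle v,w\rangle$ and the furthest pair corresponds to the minimum inner product; this only isolates orthogonality when the minimum is exactly $0$. Because the reduced vectors may carry signed entries, inner products over $\Z$ can be negative and the minimum can overshoot zero. I sidestep this by tensoring each vector with itself: for $v\in\Z^{\ell+1}$, define $\tilde v\in\Z^{(\ell+1)^2}$ by $\tilde v[i,j] := v[i]\cdot v[j]$, so $\langle\tilde v,\tilde w\rangle = \langle v,w\rangle^2 \geq 0$, vanishing iff $\langle v,w\rangle = 0$. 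Then I equalize $\|\tilde v\|_2^2$ to a common $W$ using Lagrange's four-square theorem: the nonnegative integer slack $W - \|\tilde v\|_2^2$ decomposes as a sum of four integer squares, and the four square-roots serve as padding coordinates. The result is vectors in $\Z^{(\ell+1)^2+4}$, all of squared $\ell_2$-norm $W$, with $\|\tilde v - \tilde w\|_2^2 = 2W - 2\langle v,w\rangle^2$, so a pair is furthest iff it is orthogonal in the original OV instance.

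Third, I would tally parameters and close the argument. The final dimension is $(\ell+1)^2 + 4 = O(\ell^2) = \omega(\log\log n)^2$, and all coordinate magnitudes remain at most $2^{O((d\log d)/\ell)} \cdot \poly(\ell) = 2^{o(\log n)}$, safely within the $O(\log n)$-bit budget. A hypothetical $n^{2-\eps}$-time algorithm for $\ell_2$-Furthest Pair in $\omega(\log\log n)^2$ dimensions would, invoked on each of the $n^{o(1)}$ sub-instances, decide the original OV instance in $n^{2-\eps+o(1)}$ time, contradicting OVC (and SETH, which implies OVC). The main obstacle I anticipate is the sign-removal step: tensoring is the cleanest way I see to turn signed-integer OV into a furthest-pair problem, and it is exactly this trick that lifts the dimension from $\omega(\log\log n)$ (as in Theorem~\ref{hopcroft} for Hopcroft's Problem) to the $\omega(\log\log n)^2$ appearing in the statement; any sign-free variant of Lemma~\ref{dim-red} would likely collapse the bound back to $\omega(\log\log n)$.
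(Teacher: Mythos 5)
Your overall architecture matches the paper's: feed the OV instance through Lemma~\ref{dim-red} with $\ell = \omega(\log\log n)$ (the paper packages this as the Hopcroft hardness in Theorem~\ref{hopcroft}), then use the tensoring trick $\tilde v[i,j] := v[i]\cdot v[j]$ so that $\langle\tilde v,\tilde w\rangle = \langle v,w\rangle^2 \geq 0$, turning signed-integer OV into a min-inner-product problem whose optimum is $0$ iff an orthogonal pair exists. You also correctly identify that this tensoring is precisely what squares the dimension. All of that is right and is the paper's argument.

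The gap is in your norm-equalization step. Appending four padding coordinates $a_1,\ldots,a_4$ with $a_1^2+\cdots+a_4^2 = W - \|\tilde v\|_2^2$ makes the inner product of two padded vectors
\[
\bigl\langle (\tilde v,a),(\tilde w,b)\bigr\rangle \;=\; \langle v,w\rangle^2 \;+\; \sum_{i=1}^4 a_i b_i,
\]
and the cross term $\sum_i a_i b_i$ is uncontrolled: it differs from pair to pair, can be strictly positive for the orthogonal pair, and can be smaller for some non-orthogonal pair. Consequently the minimum padded inner product (equivalently, the furthest pair under equal norms) need not coincide with the orthogonal pair, and the reduction is broken. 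Lagrange's theorem gives you the right sums of squares but not the right inner products. The paper sidesteps this entirely by \emph{dividing} each vector by its own norm rather than \emph{padding} it: normalization multiplies $\langle\tilde v,\tilde w\rangle$ by the positive scalar $1/(\|\tilde v\|_2\|\tilde w\|_2)$, which preserves exactly which pairs have zero inner product and introduces no cross contamination; then by the law of cosines $\|\hat v-\hat w\|_2^2 = 2 - 2\langle\hat v,\hat w\rangle$, so the furthest pair has distance $\sqrt 2$ iff an orthogonal pair exists. (The paper also appends two coordinates with a large value $n^{2k+1}$ and $0$ swapped between the red and blue halves to force all within-class inner products strictly positive after normalization; this is a cleanliness device you would also want, though it is less essential than fixing the padding step.) To repair your proof you should replace the four-square padding with a uniform rescaling (or the paper's normalization), handling bit precision by truncation if you want integer entries, rather than appending coordinates that participate in the dot product.
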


Therefore, computing the diameter of an $n$-point set in low-dimensional $\ell_2$ is surprisingly more difficult to solve than in the $\ell_1$ metric, or in the $\ell_{\infty}$ metric. By Gabow-Bentley-Tarjan~\cite{Gabow-Bentley-Tarjan}, there are $n^{2-\eps}$-time algorithms for furthest pair under $\ell_1$ up to ${\eps \log n}$ dimensions, and under $\ell_{\infty}$ up to $n^{1-\eps}$ dimensions. There seems to be an exponential curse of dimensionality in computing the diameter of a point set, going from $\ell_{\infty}$ to $\ell_1$, and \emph{also} going from $\ell_1$ to $\ell_2$. The following table summarizes the consequences for barely-subquadratic problems.

\begin{center}
\begin{tabular}{l|l}
 {\bf Barely-Subquadratic Problem} & {\bf Lower Bound (Under SETH or OVC)}\\
\hline
$\ell_2^d$-Furthest Pair~\cite{Yao82,Agarwal1990euclidean} & $n^{2-o(1)}$ time for $d=\omega(\log \log n)^2$ \\
 Bichrom.~$\ell_2^d$-Closest Pair~\cite{Agarwal1990euclidean} 
 & $n^{2-o(1)}$ time for $d=\omega(\log \log n)^2$
 \\
	$d$-dim. Hopcroft's Problem~\cite{Chazelle93,Matousek93}
	 & $n^{2-o(1)}$ time for $d=\omega(\log \log n)$
	\\
\end{tabular}
\end{center}

Under the present landscape of fine-grained complexity conjectures, it follows that none of the barely-subquadratic problems we have identified can be made nearly-linear:

\begin{corollary} Under SETH (or OVC), {\bf none} of $\ell_2$-Furthest Pair, Bichromatic $\ell_2$-Closest Pair, or Hopcroft's problem are solvable in $n^{2-\eps} \cdot \log^{2^{o(\sqrt{d})}} n$ time, for all $\eps > 0$.
\end{corollary}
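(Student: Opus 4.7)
The plan is to derive a contradiction between any hypothetical algorithm with runtime $n^{2-\eps}\cdot(\log n)^{g(d)}$ (for some fixed $\eps > 0$ and any $g(d) = 2^{o(\sqrt{d})}$) and the $n^{2-o(1)}$ hardness results for the three problems tabulated just above. I would evaluate the putative runtime at a carefully chosen dimension $d = D(n)$ for which both (i) the relevant hardness applies---arranging $D(n) = \omega((\log\log n)^2)$ suffices, since this is the strongest dimension hypothesis among the three lower bounds---and (ii) the overhead $(\log n)^{g(D(n))}$ is only $n^{o(1)}$, so that the total runtime is $O(n^{2-\eps/2})$ and thus contradicts hardness at dimension $D(n)$.

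Constructing $D(n)$ is the key step. Write $g(d) = 2^{p(d)\sqrt{d}}$ with $p(d)\to 0$ (which is what $g(d) = 2^{o(\sqrt{d})}$ means), and pass to the non-increasing envelope $H(d) := \sup_{d' \ge d} p(d')$, which still tends to $0$. Set
\[
  D(n) \;:=\; \left\lceil \frac{(\log\log n)^2}{\bigl(2H((\log\log n)^2)\bigr)^2} \right\rceil.
\]
Since $H((\log\log n)^2)\to 0$, the ratio $D(n)/(\log\log n)^2\to\infty$, so condition (i) holds. For condition (ii), monotonicity of $H$ gives $p(D(n)) \le H((\log\log n)^2)$, hence $\log g(D(n)) = p(D(n))\sqrt{D(n)} \le \tfrac{1}{2}\log\log n$ and therefore $g(D(n)) \le \sqrt{\log n}$. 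Consequently $(\log n)^{g(D(n))} \le (\log n)^{\sqrt{\log n}} = 2^{\sqrt{\log n}\cdot\log\log n} = n^{o(1)}$, because $\log\log n = o(\sqrt{\log n})$.

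The only mildly delicate point is handling the potential non-monotonicity of $g$, which is exactly why I pass to the envelope $H$ rather than working with $p$ directly; without this, one cannot control $p(D(n))$ in terms of $p$ at smaller arguments. Once $H$ is in hand, the same $D(n)$ simultaneously defeats all three putative algorithms, and the remainder is routine asymptotic arithmetic.
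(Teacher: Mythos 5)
Your proof is correct, and it fills in a calculation that the paper leaves implicit (the corollary is stated without an explicit proof, as a direct consequence of the preceding hardness theorems). The core move — choosing a dimension $D(n)$ that grows just enough faster than $(\log\log n)^2$ to invoke the hardness theorems, while keeping $(\log n)^{g(D(n))} = n^{o(1)}$ — is exactly what the statement requires, and using the single dimension bound $\omega((\log\log n)^2)$ for all three problems is the right simplification since it dominates the Hopcroft requirement of $\omega(\log\log n)$. The one genuinely nontrivial technical point, which you correctly identified and handled, is the passage to the non-increasing envelope $H$: without it, setting $D(n)$ in terms of $p$ does not control $p(D(n))$, and the estimate $p(D(n))\sqrt{D(n)} \le \tfrac12\log\log n$ would not follow. (Two small cosmetic notes: the ceiling introduces a harmless additive constant under the square root, so your bound $g(D(n)) \le \sqrt{\log n}$ should carry an implicit $O(1)$ factor; and the argument implicitly uses $g(d)\ge 1$, i.e.\ $p(d)\ge 0$, which is the natural reading of the runtime bound. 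Neither affects correctness.)
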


Since the above barely-subquadratic problems have closely-related nearly-linear problems, these results also show that OVC and SETH have consequences for the theory of metric embeddings. For example, since $\ell^d_{\infty}$-Furthest Pair can be solved in $\tilde{O}(d \cdot n)$ time, every $n^{1.99}$-time isometric embedding from $n$ points in $\ell^d_2$ into $\ell_{\infty}$ with $d=\omega(\log \log n)^2$ must blow up the dimension doubly-exponentially to $n^{1-o(1)}$ --- unless OVC and SETH are false. This is striking when one remembers that \emph{every $n$-point metric} can be (efficiently) isometrically embedded into $\ell_{\infty}$ with $n-1$ dimensions (by the classical Frechet embedding). 

Unfortunately, the above conditional lower bounds only hold for exact solutions to the problems. Our reductions from \ov{} to closest/furthest pair no longer work if we only have $(1+\eps)$-approximations to the closest/furthest pair (if they did, this paper would be about how OVC is false, thanks to many fast approximation algorithms for these problems~\cite{AndoniIndyk17}). 

\paragraph{Hardness for All-Nearest Neighbors.} The best known algorithms for the $\ell_2$-Closest Pair problem are nearly-linear, running in $2^{O(d)} n \log^{O(1)}n$ time. A prominent open problem is whether the exponential dependence on $d$ is necessary: \emph{Does $\ell_2$-Closest Pair require $n^{2-o(1)}$ time in $\omega(\log n)$ dimensions?} Could we show hardness under (for example) OVC or SETH?

The question is rather subtle. As mentioned earlier, the related problems of Bichromatic $\ell_2$-Closest Pair and $\ell_2$-Furthest Pair are easily shown to be \ov{}-hard in $\omega(\log n)$ dimensions~\cite{Williams05,AlmanW15}. Intuitively speaking, in both of the latter problems, our reductions can ``control'' the distances between points in such a way that it is easy to encode \ov{}. But for $\ell_2$-Closest Pair (with no colors), we have much less control, and it is difficult to keep large sets of points far enough apart to successfully encode an \ov{} instance~\cite{David16}.

Here we report some progress on this open problem. In the closely-related \emph{All-Nearest Neighbors} problem, the task is to report the $\ell_2$-closest pair for all points in the given set. Nearly-linear algorithms are also known for All-Nearest Neighbors, which have essentially the same complexity as $\ell_2$-Closest Pair~\cite{Clarkson83,Vaidya89}. We can show \ov{}-hardness for All-Nearest Neighbors:

\begin{theorem} \label{all-nn} Under OVC, the All-Nearest Neighbors problem in $\ell^{\omega(\log n)}_2$ requires $n^{2-o(1)}$ time, even restricted to vectors with entries from $\{-1,0,1\}$.
\end{theorem}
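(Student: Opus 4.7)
The plan is to exhibit an $O(nD)$-time reduction from Orthogonal Vectors on $n$ vectors in $\{0,1\}^d$ to All-Nearest-Neighbors on $3n$ points in $\{-1,0,1\}^D$ with $D = O(d)$, and then invoke OVC at dimension $d = c\log n$ for a sufficiently large constant $c$. The central trick is a three-class point construction---\emph{queries} $q_i$, \emph{targets} $t_j$, and \emph{guards} $\gamma_i$---where the guard attached to each $q_i$ serves as its default nearest neighbor unless some $u_j$ is orthogonal to $v_i$, in which case a target dethrones it. A signature block will force every intra-query distance above the guard threshold so that no $q_{i'}$ can accidentally become the nearest neighbor of $q_i$.

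I partition each point into five blocks of widths $d, d, d, K, G$. Writing $\mathrm{un}(k) := 1^k 0^{d-k}$ for the unary encoding of weight $k$, set $q_i = (v_i,\, 0^d,\, \mathrm{un}(|v_i|),\, \sigma_i,\, 0^G)$, $t_j = (\bar u_j,\, \mathrm{un}(|u_j|),\, 0^d,\, 0^K,\, 0^G)$, and $\gamma_i = (v_i,\, 0^d,\, \mathrm{un}(|v_i|),\, \sigma_i,\, \delta_i)$, where $\sigma_i \in \{-1,1\}^K$ belongs to a binary code $C$ of $n$ codewords with pairwise Hamming distance at least $\lceil (d+K+2)/4 \rceil$, and the $\delta_i \in \{-1,1\}^G$ are any $n$ distinct $\pm 1$ vectors. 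I choose $K = 10d$ and $G = d+K+1$; such a code $C$ exists by Gilbert--Varshamov because $K = \Theta(d)$ while OVC lets us take the parameter $c$ arbitrarily large.

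The identity $\|v_i - \bar u_j\|^2 = d - |v_i| - |u_j| + 2\langle v_i, u_j\rangle$ combined with the two unary weight encoders collapses the weight terms and gives $\|q_i - t_j\|^2 = d + K + 2\langle v_i, u_j\rangle$, so orthogonal pairs land at squared distance exactly $d+K$ while non-orthogonal pairs land at distance at least $d + K + 2$. Meanwhile $\|q_i - \gamma_i\|^2 = \|\delta_i\|^2 = G = d + K + 1$ and, for $i \neq i'$, $\|q_i - q_{i'}\|^2 \geq 4\,\mathrm{Ham}(\sigma_i, \sigma_{i'}) \geq d+K+2$. A short case analysis shows every remaining distance from $q_i$ (to some $\gamma_{i'}$ with $i' \neq i$, to a non-orthogonal target, or to a cross-class guard) strictly exceeds $G$, so the nearest neighbor of $q_i$ is a target $t_j$ iff $v_i$ has some orthogonal $u_j \in U$, and is $\gamma_i$ otherwise. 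Scanning the ANN output for any $q_i$ whose nearest neighbor is a target decides OV in $O(nD)$ additional time, and zero-padding the ANN instance lifts the $n^{2-o(1)}$ lower bound from dimension $O(\log n)$ to any $d(n) = \omega(\log n)$.

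The main obstacle will be that only integer-sized gaps separate consecutive distance thresholds: the chain ``orthogonal cross $<$ guard $<$ intra-query $\leq$ non-orthogonal cross'' affords exactly one integer of slack between each pair, so the signature code's squared distance alone must strictly exceed $G$. Since the $v$-part contributes an uncontrolled (possibly zero) amount to $\|q_i - q_{i'}\|^2$, the signature must carry the entire gap on its own, which forces $K = \Theta(d)$; the rate then demanded from a binary code with relative distance just above $11/40$ is precisely what Gilbert--Varshamov delivers once OVC's $c$ is taken large enough. Checking the remaining distance inequalities for all pairs of point classes, and verifying that degenerate OV inputs (duplicate vectors, weight-$0$ or weight-$d$ vectors) do not break the argument, is then routine arithmetic.
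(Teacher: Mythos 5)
Your construction is correct and reaches the paper's conclusion, but takes a genuinely different route. The paper first converts OV to Subset Containment, then enumerates over all weight pairs $(d_1,d_2)$ (restricting each ANN call to red sets of size $d_1$ and blue sets of size $d_2$) to normalize the $\ell_2$-norms, incurring $O(d^2)$ ANN queries. You avoid the enumeration by inserting unary weight-encoder blocks $\mathrm{un}(|v_i|)$ and $\mathrm{un}(|u_j|)$ in complementary positions; these exactly cancel the $-|v_i|-|u_j|$ terms in $\|v_i - \bar u_j\|^2 = d - |v_i| - |u_j| + 2\langle v_i,u_j\rangle$, giving the clean threshold $\|q_i - t_j\|^2 = d + K + 2\langle v_i,u_j\rangle$ in a single ANN call. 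The other structural difference is your third point class: the paper uses a two-color construction and checks whether the nearest-neighbor distance equals the exact containment threshold $(d_2 - d_1) + 4k$, whereas you introduce guards $\gamma_i$ at a calibrated distance $G = d+K+1$, so the decision rule becomes ``is the nearest neighbor of $q_i$ a target?''\ with no arithmetic on distances. Both constructions use a $\pm 1$ code of relative distance bounded away from zero to keep one side of the points far apart (the paper needs relative distance $3/8$ with $\eps=1/8$; you need only $11/40$). The one thing you should tighten is the appeal to Gilbert--Varshamov: GV is an existence bound, and the greedy construction runs in $2^{\Theta(K)} = n^{\Theta(c)}$ time, which would swamp the subquadratic budget. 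You should instead invoke an explicit, polynomial-time-constructible family of binary codes with constant rate and relative distance $> 11/40$ --- the Ta-Shma $\eps$-balanced codes that the paper cites, or any concatenated/expander code, suffice --- or, if one accepts a randomized reduction, a random linear code with generator matrix of size $O(\log n) \times O(\log n)$. With that citation substituted, the proof is complete and somewhat tidier than the paper's, at the cost of $3n$ points rather than $2n$.
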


The reduction goes through the Set Containment problem (equivalent to \ov{}), and uses error-correcting codes to keep one half of the vectors ``distant'' from each other, and the other half relatively ``close'' to the first half.

\section{A Dimensionality Self-Reduction for Orthogonal Vectors}

In this section, we set up the framework for proving hardness for the aforementioned ``barely-subquadratic'' problems. We begin with the following more general theorem, which will imply the dimensionality reduction lemma.

\begin{theorem}\label{REDDIM}
For every $d$ and integer $\ell \in [1,d]$, given two sets of vectors $U,V\subseteq\{0,1\}^d$, there is a deterministic algorithm running in $n \cdot d^{O(d/\ell)}$ time which outputs a list of $t = d^{O(d/\ell)}$ integers $\{k_1,\ldots,k_t\} \subseteq [0,t]$, along with sets $U',V'\subseteq\Z^{\ell+1}$ such that $|U'| = |U|$, $|V'|=|V|$, and all entries in $u',v'$ are $O((d\log d)/\ell)$-bit integers. There is an orthogonal pair $u\in U,v\in V$ if and only if there is a pair $u'\in U',v'\in V'$ such that $\langle u',v'\rangle = k_i$ for some $i$. 
\end{theorem}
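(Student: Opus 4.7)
The plan is to use a positional (``Kronecker substitution'') encoding so that $\langle u',v'\rangle$ in the reduced instance contains $\langle u,v\rangle$ as a single base-$M$ digit, and then to let the target list $\{k_i\}$ be exactly the integers having a $0$ in that digit. Pad $[d]$ with zero-coordinates if necessary so that $d = s\ell$ with $s := \lceil d/\ell\rceil$, partition into $\ell$ blocks $B_1,\ldots,B_\ell$ of size $s$, fix an ordering of each block, and set $M := d+1$. For each $u\in U$ define
\[u' := (\tilde u_1,\ldots,\tilde u_\ell, 0)\in\Z^{\ell+1},\qquad \tilde u_j := \sum_{a=1}^{s} u_{B_j[a]}\, M^{a-1},\]
and for each $v\in V$ define $v'$ the same way except with \emph{reversed} exponents $\tilde v_j := \sum_{b=1}^{s} v_{B_j[b]}\, M^{s-b}$. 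Each $\tilde u_j,\tilde v_j$ is bounded by $M^s = (d+1)^{d/\ell}$, using $O((d\log d)/\ell)$ bits as required.

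The point of the reversal is that in
\[\tilde u_j\tilde v_j \;=\; \sum_{a,b} u_{B_j[a]}\, v_{B_j[b]}\, M^{a+s-1-b},\]
the exponent $a+s-1-b$ equals $s-1$ precisely when $a=b$. Summing over blocks,
\[\langle u',v'\rangle \;=\; \sum_{k=0}^{2s-2} c_k M^k,\]
where $c_{s-1} = \sum_{j,a} u_{B_j[a]} v_{B_j[a]} = \langle u,v\rangle$, while for $k\neq s-1$ the coefficient $c_k$ is at most $\ell(s-|k-s+1|)\le d$, since it is a sum of indicator products at a single off-diagonal exponent across the $\ell$ blocks. Because $M = d+1$ exceeds every $c_k$, no carries occur and the $c_k$'s are exactly the base-$M$ digits of $\langle u',v'\rangle$. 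Hence $u\perp v$ iff the digit $c_{s-1}$ vanishes iff $\langle u',v'\rangle$ lies in the set
\[K \;=\; \Big\{\sum_{k\in\{0,\ldots,2s-2\}\setminus\{s-1\}} c_k M^k\ :\ 0\le c_k\le d\Big\}.\]
I would take $(k_1,\ldots,k_t)$ to be an enumeration of $K$; since $|K|=(d+1)^{2s-2}$ and $\max K \le d\cdot M^{2s-2}$ are both of order $d^{O(d/\ell)}$, we can set $t = d^{O(d/\ell)}$ (padding the enumeration arbitrarily, if needed, so that the length is $t$ and $\{k_i\}\subseteq[0,t]$).

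Each $u',v'$ is computed in $O(d)$ time and enumerating $K$ costs $d^{O(d/\ell)}$ time, for a total of $n\cdot d^{O(d/\ell)}$ as required. The one step needing care is the no-carry bound: one must check, at every exponent $k\neq s-1$, that off-diagonal contributions summed across all $\ell$ blocks stay below $M$, so that choosing $M=d+1$ cleanly isolates $\langle u,v\rangle$ in position $s-1$. Once that bound is verified, the iff characterization of orthogonality via membership in $K$, and the cardinality estimate on $K$, fall out immediately.
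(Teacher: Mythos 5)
Your proposal is correct and is in substance the same reduction the paper gives: what you phrase as Kronecker substitution and base-$M$ digits is precisely the paper's encoding of each block as a degree-$(s-1)$ polynomial in $x$ (with reversed exponents for the $V$ side), evaluated at the single point $x_0 = M = d+1$; the no-carry condition you check is exactly the paper's observation that every coefficient of the product polynomial $R_{u,v}$ lies in $[0,d]$, so evaluation at $d+1$ is injective on those polynomials, and your set $K$ is the paper's enumeration of all such polynomials whose coefficient of $x^{s-1}$ is zero.
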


Although it may be difficult to see in hindsight, the proof of Theorem~\ref{REDDIM} is inspired by the Merlin-Arthur communication protocol with $\tilde{O}(\sqrt{d})$ communication for Inner Product, due to Aaronson and Wigderson~\cite{Aaronson-Wigderson09}. In that protocol, two parties each hold a $d$-bit vector, and they wish to determine if their vectors are orthogonal. The protocol shows how a prover can send an $\tilde{O}(\sqrt{d})$-bit message to the two parties, such that the two parties only need to exchange $\tilde{O}(\sqrt{d})$-bits (with $O(\log d)$ public randomness) to determine orthogonality with high probability. They do this by encoding $d$-bit vectors with $O(\sqrt{d})$-degree bivariate polynomials, and their protocol uses the key good property of low-degree polynomials that we know (they have few roots, so evaluating distinct two polynomials at a random point will yield two distinct values, with decent probability). 

In the below proof of Theorem~\ref{REDDIM}, there are several major differences. First, we forget one of the variables, and encode our $d$-bit vectors with $\ell$-dimensional vectors whose entries are $d/\ell$-degree univariate polynomials. Second, the parameter $\ell$ allows for a trade-off between the length of the vector and the degrees of the polynomials. (This corresponds to a trade-off between the length of the prover's message and the length of the others' messages, in the Merlin-Arthur protocol.) Third, we do \emph{not} pick random points to evaluate the polynomial on, but rather a single deterministic value. This actually suffices for our purposes.

\begin{proof} Without loss of generality, assume $d$ is a multiple of $\ell$, otherwise we can add zeroes at the end of each vector to satisfy this assumption.

Consider two vectors $u \in U,v \in V$. Divide the $d$ dimensions of both vectors into $\ell$ contiguous blocks, each of which contains $d/\ell$ dimensions. 
Suppose the $i$th block of $u$ is $[u_{i,1},\ldots,u_{i,d/\ell}]$ and the $i$th block of $v$ is $[v_{i,1},\ldots,v_{i,d/\ell}]$, where all $u_{i,j}, v_{i,j} \in \{0,1\}$. Construct the polynomials 
\[P_{u,i}(x) =\sum_{j=1}^{d/\ell} u_{i,j}\cdot x^{j-1}\] and 
\[Q_{v,i}(x) = \sum_{j=1}^{d/\ell} v_{i,j} \cdot x^{d/\ell-j}.\] 
Let $P_u(x)$ be the $\ell$-dimensional vector $[P_{u,1},\ldots,P_{u,\ell}]$ and 
$Q_v(x)$ be the $\ell$-dimensional vector $[Q_{v,1},\ldots,Q_{v,\ell}]$. Observe that the coefficient of $x^{d/\ell-1}$ in the polynomial $R_{u,v}(x) = \ip{P_u(x)}{Q_v(x)}$ is exactly
\[\sum_{i=1}^{\ell} \sum_{j=1}^{d/\ell} u_{i,j}\cdot v_{i,j} = \ip{u}{v}.\] Furthermore, note that for any $u \in U$ and $v \in V$, the polynomial $R_{u,v}(x)$ has degree at most $2d/\ell$, and each of its coefficients are integers in $[0,d]$. 

Now we are ready to describe the reduction. First, enumerate all $t = d^{O(d/\ell)}$ polynomials $R(x)$ of degree at most $2d/\ell$ with coefficients in $[0,d] \cap \Z$ such that the coefficient of $x^{d/\ell-1}$ equals $0$. 

Set $x_0 := d+1$. Note that, given the integer value $k = R(x_0) = \sum_{i=0}^{\ell-1}P_{u,i}(x_0) \cdot Q_{v,i}(x_0)$, the polynomial $R(x)$ is uniquely determined, because all of its coefficients are integers in $[0,d]$.

For all $u \in U$ and $v \in V$, compute $u' := P_u(x_0)$ and $v' := Q_v(x_0)$, creating two sets of vectors $U'$ and $V'$ where all vectors have $\ell$ dimensions, with entries of bit length at most $O((d \log d)/\ell)$. 

By enumerating over all such polynomials $R(x)$, we obtain sets $U'$, $V'$, and collection of $t$ integers $\{R(x_0)\}$ each in $O(d\log d)/\ell$ bits, satisfying the conclusion of the theorem. In particular, the vectors $u \in U$, $v \in V$ satisfy $\ip{u}{v} = 0$ if and only if there is \emph{some} polynomial $R(x)$ of degree at most $2d/\ell$ with coefficients in $[0,d] \cap \Z$ such that $\ip{P_u(x_0)}{Q_v(x_0)} = R(x_0)$. \end{proof}

Now we prove the dimensionality reduction lemma:

\begin{reminder}{Lemma~\ref{dim-red}}[Dimensionality Reduction for OV] Let $\ell \in [1,d]$. There is an $n \cdot d^{O(d/\ell)}$-time reduction from \ov{} for $n$ points in $\{0,1\}^d$ to $d^{O(d/\ell)}$ instances of \ov{} for $n$ points in $\Z^{\ell+1}$, with vectors of $O(\log n)$-bit entries.
\end{reminder}

\begin{proof} Given a set $S$ of $n$ (non-zero) vectors in $\{0,1\}^d$, set $U := S$ and $V := S$ in Theorem~\ref{REDDIM}, which produces $n$ vectors $U'$ and $V'$ in $\Z^{\ell}$ along with a set of $d^{O(d/\ell)}$ numbers $T$ such that $S$ has an orthogonal pair if and only if there is some $u \in U'$, $v \in V'$, and $k \in T$ such that $\ip{u}{v}=k$. 

For every $k \in T$, create new sets of vectors $U'_k,V'_k$, where every $u \in U'$ is  replaced by $u_k := [u, 1]$ in $U'_k$, and every $v \in V'$ is replaced by $v_k := [v, -k]$ in $V'_k$. Since all entries in $u$ and $v$ are non-negative, we observe:
\begin{enumerate}
\item for all $u \in U'$ and $v \in V'$, $\ip{u}{v}=k$ if and only if $\ip{u_k}{v_k}=0$, 
\item for every pair $u_k, u'_k \in U_k$,  $\ip{u_k}{u'_k}\geq 1$, and
\item for every pair $v_k, v'_k \in V_k$,  $\ip{v_k}{v'_k}\geq k^2$.
\end{enumerate}
Consider the set $S_k := U_k \cup V_k \subset \Z^{\ell}$. By the above three facts, we could only obtain an orthogonal pair of vectors in $S_k$ by taking one vector from $U_k$ and one vector from $V_k$, and $S_k$ contains an orthogonal pair if and only if there is some $u \in U'$ and $v \in V'$ such that $\ip{u}{v}=k$. Our reduction calls $\ov{}$ on $S_k$ for every $k \in T$, and outputs the relevant orthogonal pair for $S$ if any of the calls return an orthogonal pair for some $S_k$.
\end{proof}

\subsection{Consequences}

Here we show how the above Dimensionality Reduction for OV implies hardness for the barely-subquadratic problems mentioned in the introduction.

\begin{reminder}{Theorem~\ref{hopcroft}}[Hardness of Hopcroft's Problem]  Under SETH (or OVC), Hopcroft's problem in $\omega(\log \log n)$ dimensions requires $n^{2-o(1)}$ time, with vectors of $O(\log n)$-bit entries.
\end{reminder}

\begin{proof} Let $c \geq 1$ be an arbitrary constant and let $d := c\log n$. We show how an oracle for Hopcroft's problem in $\omega(\log \log n)$ dimensions, running in $O(n^{2-\delta})$ time for some universal $\delta > 0$, can be used to solve \ov{} for $n$ vectors in $d$ dimensions in $n^{2-\delta+\eps}$ time (regardless of $c$) for every $\eps > 0$, which would refute the OVC.

Set $\ell := c(\log d)/\alpha = c \log(c \log n)/\alpha$, for a small parameter $\alpha > 0$ to be set later. Applying Lemma~\ref{dim-red} to a given subset $S \subseteq \{0,1\}^d$, the reduction runs in time \[n \cdot (c \log n)^{O(c\log n)/\ell)} \leq n \cdot c^{O(\alpha \log n)} \leq n^{1+O(\alpha \log(c))},\]and produces $n^{O(\alpha \log(c))}$ instances of \ov{} with $n$ points in $\Z^{(c \log \log n)/\alpha + O(1)}$, with vectors of $O(\log n)$-bit entries. Setting $\alpha \ll \eps/\log(c)$, the reduction generates $O(n^{\eps})$ instances in $\Omega(1/\eps \cdot c \log(c) \cdot \log \log n)$ dimensions, each of which our Hopcroft oracle solves in $n^{2-\delta}$ time, by assumption. This concludes the proof.
\end{proof}

\begin{reminder}{Theorem~\ref{furthest-pair}}[Hardness of $\ell_2$-Furthest Pair]  Under SETH (or OVC), finding a $\ell_2$-furthest pair in $\omega(\log \log n)^2$ dimensions requires $n^{2-o(1)}$ time, with vectors of $O(\log n)$-bit entries.
\end{reminder}

\begin{proof} Given a fast algorithm for $\ell_2$-furthest pair in $\omega(\log \log n)^2$ dimensions, we show how to quickly solve Hopcroft's problem on $n$ points in $\omega(\log \log n)$ dimensions, and appeal to Theorem~\ref{hopcroft}.

Let $S$ be a set of $n$ vectors in $\Z^{\ell}$ with $\ell = \omega(\log \log n)$ and with $O(\log n)$-bit entries. Let $k > 1$ be such that every entry of every vector has magnitude less than $n^k$. In the following, let $v[i]$ denote the $i$th component of a vector $v$.

For every vector $u \in S$, define the $(\ell^2+2)$-dimensional vector
\[u' := [u[1]\cdot u[1], u[1]\cdot u[2], \ldots, u[i]\cdot u[j], \ldots, u[\ell]\cdot u[\ell],0, n^{2k+1}].\] That is, the first $\ell^2$ components of $u'$ are all possible products of two components of $u$, followed by the entries $0$ and $n^{2k+1}$. Put each $u'$ in a set $U'$. Also for every vector $v \in S$, define the $(\ell^2+2)$-dimensional vector
\[v' := [v[1]\cdot v[1], v[1] \cdot v[2], \ldots, v[i]\cdot v[j], \ldots, v[\ell]\cdot v[\ell],n^{2k+1},0],\] and put $v'$ in a set $V'$. Now observe that:
\begin{itemize}
\item for $u'_1,u'_2 \in U'$ coming from some $u_1,u_2 \in S$, $\ip{u'_1}{u'_2} = \sum_{i,j\in[\ell]} u_1[i]u_1[j] u_2[i] u_2[j] + n^{4k+2}$.
\item for $v'_1,v'_2 \in V'$ coming from some $v_1,v_2 \in S$, $\ip{v'_1}{v'_2} = \sum_{i,j\in[\ell]} v_1[i]v_1[j] v_2[i] v_2[j] + n^{4k+2}$.
\end{itemize} Note that by our choice of $k$, $|u_1[i]u_1[j] u_2[i] u_2[j]| \leq n^{4k}$ for all $i,j$. So all inner products are positive, in both of the above cases. In contrast, for $u' \in U'$ and $v' \in V'$, \[\langle u',v'\rangle = \sum_{i,j \in [\ell]} u[i]u[j]v[i]v[j] = \sum_{i,j} u[i]v[i]\cdot u[j]v[j] = (\langle u,v\rangle)^2.\] Now, all possible inner products between every $u' \in U'$ and $v' \in V'$ are non-negative, and $\ip{u'}{v'}=0$ if and only if $\ip{u}{v}=0$. 

Suppose we normalize all vectors in $U'$ and $V'$, replacing each vector $u'$ and $v'$ by $u'' := u'/||u'||_2$. Since \[\langle u'',v''\rangle = \frac{1}{||u'||_2\cdot ||v'||_2}\langle u,v\rangle,\]
the vector pairs in $U',V'$ with zero inner product are exactly preserved, and all inner products of pairs within $U'$ (and of pairs within $V'$) are still positive. By the law of cosines, for all $u'' \in U'$ and $v'' \in V'$ we have
\[||u''-v''||^2_2 =  ||u''||^2_2 - ||v'||^2_2 - 2\langle u'',v''\rangle = 2 - 2\langle u'',v''\rangle.\] 
Therefore, taking $S := U' \cup V'$, solving Hopcroft's problem on $S$ is equivalent to finding two vectors with $\ell_2$-distance at least $\sqrt{2}$, and this is the maximum possible distance between two vectors in the instance. It follows that solving $\ell_2$-furthest pair on these instances will solve Hopcroft's problem on them as well.
\end{proof}

\begin{corollary}[Hardness of Bichromatic $\ell_2$-Closest Pair] \label{closest-pair} Under SETH (or OVC), finding a bichromatic $\ell_2$-closest pair in $\omega(\log \log n)^2$ dimensions requires $n^{2-o(1)}$ time, with vectors of $O(\log n)$-bit entries.
\end{corollary}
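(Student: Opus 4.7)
The plan is to reuse the construction from the proof of Theorem~\ref{furthest-pair} essentially verbatim, with one tiny modification: I negate every vector in one of the two point classes, so that the orthogonal case becomes the \emph{closest} bichromatic pair rather than the furthest. Concretely, starting from an instance $S \subseteq \Z^{\ell}$ of Hopcroft's problem with $\ell = \omega(\log\log n)$, I build the normalized unit vectors $U'', V'' \subset \R^{\ell^2 + 2}$ exactly as in that proof, color $U''$ red, and color $W'' := \{-v'' : v'' \in V''\}$ blue.

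The key calculation will be the law of cosines applied to bichromatic pairs: for $u'' \in U''$ and $w'' = -v'' \in W''$,
\[
\|u'' - w''\|_2^2 \;=\; 2 - 2\langle u'', w''\rangle \;=\; 2 + 2\langle u'', v''\rangle.
\]
By the properties established in the proof of Theorem~\ref{furthest-pair}, $\langle u'', v''\rangle \geq 0$ with equality iff the originating pair $u, v \in S$ is orthogonal. Hence every bichromatic distance is at least $\sqrt{2}$, with equality precisely when $S$ contains an orthogonal pair. A bichromatic $\ell_2$-closest pair oracle therefore decides Hopcroft's problem on $S$ by comparing the reported minimum to $\sqrt{2}$, and the claimed lower bound then follows from Theorem~\ref{hopcroft}. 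All parameters transfer unchanged: the ambient dimension is $\ell^2 + 2 = \omega(\log\log n)^2$, entries remain $O(\log n)$-bit after clearing the normalization denominator, and the $n^{o(1)}$ blowup from Lemma~\ref{dim-red} has already been absorbed inside the statement of Theorem~\ref{hopcroft}.

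I do not expect a serious obstacle. The only point requiring a brief check is whether intra-color pairs could spuriously beat the orthogonal bichromatic pair, but since the oracle only considers bichromatic pairs, the strictly positive intra-color inner products (which force intra-$U''$ and intra-$W''$ distances strictly below $\sqrt{2}$) are harmless. This is precisely why the bichromatic closest-pair variant admits such a clean reduction, in contrast to the uncolored variant, where extra care is needed (as in the All-Nearest Neighbors result, Theorem~\ref{all-nn}) to keep same-color points far apart.
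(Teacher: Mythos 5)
Your proposal is correct and matches the paper's argument: the proof of Corollary~\ref{closest-pair} likewise negates the blue side of the tensored and normalized Hopcroft instance so that every red--blue inner product is nonpositive, vanishing exactly on orthogonal source pairs, and then reads off orthogonality from the $\sqrt{2}$ distance threshold via the law of cosines. The only cosmetic difference is that the paper drops the two auxiliary padding coordinates $(0,n^{2k+1})$ and $(n^{2k+1},0)$, since those existed solely to disqualify intra-set pairs in the uncolored furthest-pair reduction and the bichromatic coloring makes them unnecessary, whereas you retain them---which is harmless.
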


\begin{proof} 
As before, we begin from the proof of hardness for Hopcroft's problem (Theorem~\ref{hopcroft}). The reduction there computes $O(n^{\eps})$ instances of Hopcroft's problem on $n$ points in $\Omega(1/\eps \cdot c \log(c) \cdot \log \log n)$ dimensions, for any desired $\eps > 0$. We will slightly modify the proof of Theorem~\ref{furthest-pair} for furthest pair to work for bichromatic closest pair.

Let $S$ be a set of $n$ vectors in $\Z^{\ell}$ with $\ell = \omega(\log \log n)$ and $O(\log n)$-bit entries. We wish to know if two vectors in $S$ are orthogonal. 

Let $v[i]$ denote the $i$th component of a vector $v$. We define the vectors in $U'$ very similarly to the proof of Theorem~\ref{furthest-pair}: for all $u \in S$, make the $\ell^2$-dimensional vector
\[u' := [u[1]\cdot u[1], u[1]\cdot u[2], \ldots, u[i]\cdot u[j], \ldots, u[\ell]\cdot u[\ell]].\] That is, each component of $u'$ is a product of two components of $u$. Put each $u'$ in a set $U'$ of \emph{red points}. For every vector $v \in S$, define the $\ell^2$-dimensional vector
\[v' := [-v[1]\cdot v[1], -v[1] \cdot v[2], \ldots, -v[i]\cdot v[j], \ldots, -v[\ell]\cdot v[\ell]],\] and put $v'$ in a set $V'$ of \emph{blue points}. 
Now observe that for every red $u' \in U'$ and every blue $v' \in V'$, 
\[\ip{u'}{v'} = -(\langle u,v\rangle)^2.\] Thus the inner product between a red $u'$ and a blue $v'$ is zero when $\ip{u}{v} = 0$, and is otherwise negative. If we normalize all vectors in $U'$ and $V'$, those red-blue pairs with zero inner product are preserved, and the rest of the red-blue pairs still have negative inner product. Analogously as in Theorem~\ref{furthest-pair}, this means that the red-blue pairs with zero inner product have Euclidean distance $\sqrt{2}$, and all other red-blue pairs have distance strictly greater than $\sqrt{2}$. Therefore finding the closest red-blue pair in this $\ell^2$-dimensional instance will solve the original instance $S$ of Hopcroft's problem.
\end{proof}

\section{Hardness of Euclidean All-Nearest Neighbors}

Here we prove hardness for All-Nearest Neighbors in $\omega(\log n)$ dimensions: 

\begin{reminder}{Theorem~\ref{all-nn}} Under OVC, the All-Nearest Neighbors problem in $\ell^{\omega(\log n)}_2$ requires $n^{2-o(1)}$ time, even restricted to vectors with entries from $\{-1,0,1\}$.
\end{reminder}

It seems plausible that there is a sub-quadratic-time reduction from All-Nearest Neighbors to $\ell_2$-Closest Pair (even in high dimensions), so we think of Theorem~\ref{all-nn} as good evidence that $\ell_2$-Closest Pair is also hard for $\omega(\log n)$ dimensions.

\begin{proof} Let $d = O(\log n)$. We begin with the Subset Containment problem: \emph{Given $n$ red subsets of $[d]$ and $n$ blue subsets of $[d]$, is there some red subset that is contained in some blue subset?} It is well-known that this problem is equivalent to \ov{} on $n$ vectors in $d$ dimensions~\cite{Williams05} (imagine you have a red/blue version of \ov{} with vectors in $\{0,1\}^d$, and flip all the bits of the blue vectors; this converts the \ov{} instance to a Subset Containment instance). 

The main idea of the proof is to use error correcting codes over $\{-1,1\}$ to keep the red points ``far apart'' from each other, so that the nearest neighbor of each red point $x$ is a blue point $y$ which is as close to being a superset of $x$ as possible. 

Let $R$ be the collection of red sets and let $B$ be the blue sets. We will think of them as vectors in $\{0,1\}^d$ in the natural way. First, we do a trick which will help control the vector norms. Try all pairs of integers $d_1, d_2 \in [d]$ with $d_1 < d_2$. Take the subset $R_{d_1}$ of $R$ which only contains vectors having exactly $d_1$ ones, and take the subset $B_{d_2}$ of $B$ which only contains vectors having exactly $d_2$ ones. We will work with the collections $R' := R_{d_1}$ and $B' := B_{d_2}$ in the following. (The benefit is that we now may assume that all red vectors have the same norm value $v_A$, and all blue vectors have the same norm value $v_B$, and it only costs $O(d^2)$ extra calls.)

Let $\eps \in (0,1/2)$. We say that a \emph{code with distance at least $(1/2-\eps)$} is a collection of vectors $S \subseteq \{-1,1\}^k$ such that for all $u, v \in S$ with $u \neq v$, $\ip{u}{v} \leq 2\eps k$. Note this condition is equivalent to saying that the Hamming distance between each pair of $k$-dimensional vectors is at least $(1/2-\eps)k$. Such codes are known to have polynomial-time constructions. In particular, it was recently shown how to efficiently construct a set $S$ of at least $n$ such vectors, with dimension only $k \leq t=O(\log n)/\eps^{2+o(1)}$~\cite{TaShma17}. In the following, let $S$ be such a code with $\eps = 1/8$ and the dimension $k$ as a parameter to be set later. 

We will add $k$ dimensions to all vectors in $R'$ and $B'$. For each vector $v_i \in R$, for $i=1,\ldots,n$, we concatenate the $i$th codeword from $S$ to the end of it, obtaining a $(d+k)$-dimensional vector $v'_i$. For each vector $w_i \in B$, we concatenate $k$ zeroes to the end, obtaining a $(d+k)$-dimensional $w'_i$.

Observe that for all vectors $v'_i$ from $R'$, their $\ell_2$-norm squared is 
\begin{align}\label{red-norm-squared}||v'_i||^2_2 = d_1 + \sum_{i=1}^{k} 2^2 = d_1 + 4k\end{align} For all vectors $w'_i$ from $B'$, we have $||w'_i||^2_2 = d_2$.
Furthermore, observe that for every two vectors $v'_i,v'_j$ from $R'$, their inner product is at most $(d_1-1)+2\eps k$, because the original vectors $v_i$ and $v_j$ were distinct vectors with exactly $d_1$ ones (so their inner product is at most $d_1-1$), and the inner product of any two distinct codewords is at most $2\eps k$. Therefore we have 
\begin{align*}
||v'_i-v'_j||^2_2 &= ||v'_i||^2_2 + ||v'_j||^2_2 - 2\ip{v'_i}{v'_j}\\
& = 2(d_1 + 4k) - 2\ip{v'_i}{v'_j} \text{~~~~~~(by \eqref{red-norm-squared})}\\
& \geq 2(d_1 + 4k) - 2(d_1-1 + 2\eps k) = 8k + 2 - 4\eps k = (8-1/2)k + 2.\end{align*} 
On the other hand, for a vector $v'_i$ from $R'$ and a vector $w'_j$ from $B'$, 
\[||v'_i-w'_j||^2_2 = ||v'_i||^2_2 + ||w'_j||^2_2 - 2\ip{v'_i}{w'_j} = d_1 + 4k + d_2 - 2\ip{v'_i}{w'_j}.\] 
Note that the inner product $\ip{v'_i}{w'_j}$ is maximized when the original subset $v_i$ (of cardinality $d_1$) is contained in the subset $w_j$ (of cardinality $d_2$), in which case $\ip{v'_i}{w'_j} = d_1$. So the minimum possible distance between $v'_i$ and $w'_j$ is  
\[||v'_i-w'_j||^2_2 = d_1 + 4k + d_2 - 2\ip{v'_i}{w'_j}  = (d_2 - d_1) + 4k.\] 
Putting it all together, suppose we set $k$ large enough that \[(8-1/2)k + 2 > d+4k\] (e.g. $k \geq d$ will do). From there, if there is some red set (of cardinality $d_1$) in $R$ contained in a blue set (of cardinality $d_2$) in $B$, then the nearest neighbor of the corresponding point in $R'$ will be a point in $B'$ with distance precisely $(d_2 - d_1) + 4k$ from it. Set $k = \Theta(\log n)$ so that it is at least $d$, and it is large enough to support at least $n$ distinct codewords with $\eps = 1/8$. 

We have reduced \ov{} with $n$ vectors in $\{0,1\}^{c \log n}$ to $n$ points in $\{-1,0,1\}^{\Theta(\log n)}$, such that computing all-nearest neighbors in $\ell_2$ will determine if the original instance had a red set contained in a blue set. In particular, we can check for every point whether its nearest neighbor corresponds to a set containing it in the original instance, or a set it contains. By the above, there is a red set contained in a blue set if and only if for the cardinalities $d_1$ and $d_2$ of these respective sets, the nearest neighbor to some point $v$ in $R_{d_1}$ is a point in $B_{d_2}$ with distance only $(d_2 - d_1) + 4k$ from $v$.
\end{proof}

\section{Conclusion}

We have given some rigorous explanation for why certain point-location and proximity problems only admit barely-subquadratic time algorithms: they can encode difficult high-dimensional Boolean problems in surprisingly low dimensions. In contrast, the nearly-linear proximity problems seem incapable of such an encoding; moreover, if any of them \emph{were} found to be capable, we would be refuting some major conjectures in fine-grained complexity. 

It is likely that many more consequences can be derived than what we have shown here. 
\begin{itemize}
\item For one example, Backurs and Indyk (personal communication) have noticed that our lower bound for bichromatic $\ell_2$-Closest Pair implies an inapproximability result for the \emph{fast Gauss transform}~\cite{Greengard-Strain91}, where we are given a set of $n$ red vectors $R$ and $n$ blue vectors $B$ in $\R^d$, and are asked to compute \[F(r) = \sum_{b \in B}	e^{-||a-b||^2}\] for every $r \in R$. In particular, they have observed that (under OVC) $F$ cannot be approximated with an additive $\eps$-error in $n^{2-\delta} \cdot \poly(\log(1/\eps),2^d)$ time, for any fixed $\delta > 0$. 
\item For another example, a variant of the reduction in Lemma~\ref{dim-red} (where instead of setting $x := d+1$ in the polynomials, we imagine trying \emph{all} choices for $x$ from a large-enough field, and we build larger-dimensional Boolean vectors whose inner products model the process of computing inner products among all values of $x$) was used in recent work with Abboud and Rubenstein~\cite{ARW17} to show that finding a vector pair of maximum inner product among a set of $n$ Boolean $n^{o(1)}$-dimensional vectors is hard to non-trivially approximate in sub-quadratic time.
\end{itemize}

There are many interesting questions to pursue further; here are some particularly compelling ones.
\begin{enumerate}
\item Can the $\omega(\log \log n)$ and $\omega(\log \log n)^2$ dimensionality in our hardness reductions be reduced, all the way down to $\omega(1)$ dimensions? This would demonstrate very tight hardness for solving these problems. The main bottleneck is that in the main reduction (Theorem~\ref{REDDIM} and Lemma~\ref{dim-red}) it seems we have to compute $O(\log n)^{O(\log n)/\ell}$ different instances to go from $O(\log n)$ dimensions down to $\ell$ dimensions; perhaps there is a more efficient reduction method.

\item All of the nearly-linear problems discussed in this paper actually have $2^{O(d)} \cdot n \log^{O(1)} n$-time algorithms, except for bichromatic $\ell_1$ and $\ell_{\infty}$ closest pair, for which their best known algorithms have the running time bound $n \cdot \log^{O(d)} n$. Could stronger hardness be established for these two problems, or can their dependence on $d$ be improved? So far, prior work~\cite{Williams05,David16} has only established quadratic-time hardness for these problems when $d =\omega(\log n)$, so it is quite possible that they are in fact solvable in $2^{O(d)} \cdot  n \log^{O(1)} n$ time, like the other nearly-linear problems.

\item The All-Nearest Neighbors problem is solvable in $2^{O(d)} \cdot n\log^{O(1)} n$ time in the general case, not just when all vectors are in $\{-1,0,1\}$. Is there is a dimensionality reduction for the special case of $\{-1,0,1\}$, similar to Lemma~\ref{dim-red}? (Please note that this would likely refute OVC and SETH.)

\item Do any of the ``popular conjectures'' in fine-grained complexity imply that $\ell_2$-Closest Pair requires $n^{2-o(1)}$ time in $\omega(\log n)$ dimensions?
\end{enumerate}

\section*{Acknowledgements}

I am grateful to Amir Abboud, Arturs Backurs, Piotr Indyk, Aviad Rubenstein, and Huacheng Yu for useful comments and discussions. In particular, several years ago Huacheng took patient notes from one of our meetings, and wrote up a version of the main lemma presented here. Unfortunately he declined to be an author on this paper.

\bibliographystyle{alpha}
\bibliography{papers}

\end{document}